\def\arxiv{1}
\ifnum\arxiv=1
    \documentclass[11pt]{article}
\else
    \documentclass[a4paper, USenglish, authorcolumns]{lipics-v2019}
\fi

\usepackage{amsmath}
\usepackage{amssymb}
\usepackage{amsthm}
\usepackage{bbm}
\usepackage{enumitem}
\usepackage{float}

\usepackage{hyperref}
\usepackage{makecell}
\usepackage{mdframed}
\usepackage{tikz}
\usepackage[noend]{algorithm2e}
\usepackage{multirow}

\ifnum\arxiv=1
    \usepackage[margin=1in]{geometry}
    \usepackage{hyperref}
    \hypersetup{
        colorlinks=true,
        linkcolor=[rgb]{0 0 .6},
        citecolor=[rgb]{0 0 .6}
    }
    \bibliographystyle{plain}
    \theoremstyle{plain}
    
\else
    \bibliographystyle{plainurl}
\fi
\newtheorem{thm}{Theorem}[section]
    \newtheorem{clm}[thm]{Claim}
    \newtheorem{lem}[thm]{Lemma}
    \newtheorem{coro}[thm]{Corollary}
    \theoremstyle{definition}
    \newtheorem{defn}[thm]{Definition}

\newcommand{\N}{\mathbb{N}}
\newcommand{\R}{\mathbb{R}}
\newcommand{\Z}{\mathbb{Z}}

\newcommand{\bD}{\mathbf{D}}
\newcommand{\bU}{\mathbf{U}}

\newcommand{\cA}{\mathcal{A}}

\newcommand{\cM}{\mathcal{M}}

\newcommand{\cP}{\mathcal{P}}

\newcommand{\cR}{\mathcal{R}}
\newcommand{\cS}{\mathcal{S}}

\newcommand{\cX}{\mathcal{X}}
\newcommand{\cY}{\mathcal{Y}}
\newcommand{\cZ}{\mathcal{Z}}
\newcommand{\tc}{\tilde{c}}

\newcommand{\Ber}{\mathrm{Ber}}
\newcommand{\Bin}{\mathrm{Bin}}

\newcommand{\Pois}{\mathrm{Pois}}

\newcommand{\poly}{\mathrm{poly}}

\newcommand{\hist}{\mathrm{hist}}
\newcommand{\zc}{\mathrm{zsum}}

\newcommand{\eps}{\varepsilon}
\newcommand{\zo}{\{0,1\}}

\overfullrule=2cm

\ifnum\arxiv=1
    \title{Separating Local \& Shuffled \\Differential Privacy via Histograms\\}
    \author{
    Victor Balcer\footnote{School of Engineering \& Applied Sciences, Harvard University. Supported by NSF grant CNS-1565387.}~ and Albert Cheu\footnote{Khoury College of Computer Sciences, Northeastern University. Supported by NSF grants CCF-1718088, CCF-1750640, and CNS-1816028.}}
    \date{\today}
\else
    \title{Separating Local \& Shuffled \\Differential Privacy via Histograms}
    
    \titlerunning{Separating Local \& Shuffled D.P.}
    
    \author{Victor Balcer}
    {School of Engineering \& Applied Sciences, Harvard University, United States.}{}{}{Supported by NSF grant CNS-1565387.}
    
    \author{Albert Cheu}{Khoury College of Computer Sciences, Northeastern University, United States.}{}{}{Supported by NSF grants CCF-1718088, CCF-1750640, and CNS-1816028.}
    \authorrunning{V. Balcer and A. Cheu}
    
    \Copyright{Victor Balcer and Albert Cheu}
    
    \ccsdesc[500]{Security and privacy~Privacy-preserving protocols}
    
    \keywords{Differential Privacy, Distributed Protocols, Histograms}
    
    \nolinenumbers
    
    \acknowledgements{We are grateful to Daniel Alabi and Maxim Zhilyaev for discussions that shaped the early stages of this work.
    We are also indebted to Matthew Joseph and Jieming Mao for directing us to the pointer-chasing and multi-party pointer-jumping problems. Finally, we thank Salil Vadhan for editorial comments and providing a simpler construction for Claim \ref{clm:pure-removal-multi}.}
    
    \EventEditors{Yael Tauman Kalai, Adam D. Smith, and Daniel Wichs}
    \EventNoEds{3}
    \EventLongTitle{1st Conference on Information-Theoretic Cryptography (ITC 2020)}
    \EventShortTitle{ITC 2020}
    \EventAcronym{ITC}
    \EventYear{2020}
    \EventDate{June 17--19, 2020}
    \EventLocation{Boston, MA, USA}
    \EventLogo{}
    \SeriesVolume{163}
    \ArticleNo{1}
\fi

\begin{document}
\maketitle

\begin{abstract}
    Recent work in differential privacy has highlighted the \emph{shuffled model} as a promising avenue to compute accurate statistics while keeping raw data in users' hands.
    We present a protocol in this model that estimates histograms with error \emph{independent of the domain size}.
    This implies an arbitrarily large gap in sample complexity between the shuffled and local models.
    On the other hand, we show that the models are equivalent when we impose the constraints of pure differential privacy and single-message randomizers.
\end{abstract}

\section{Introduction}
The \emph{local model} of private computation has minimal trust assumptions: each user executes a randomized algorithm on their data and sends the output \emph{message} to an analyzer \cite{KLNRS08,Warner65,EGS03}.
While this model has appeal to the users---their data is never shared in the clear---the noise in every message poses a roadblock to accurate statistics. For example, a locally private $d$-bin histogram has, on some bin, error scaling with $\sqrt{\log d}$. But when users trust the analyzer with their raw data (the \emph{central model}), there is an algorithm that achieves error independent of $d$ on every bin.


Because the local and central models lie at the extremes of trust, recent work has focused on the intermediate \emph{shuffled model} \cite{BittauEMMRLRKTS17, CSU+19}. In this model, users execute ranomization like in the local model but now a trusted \emph{shuffler} applies a uniformly random permutation to all user messages before the analyzer can view them. The anonymity provided by the shuffler allows users to introduce less noise than in the local model while achieving the same level of privacy. This prompts the following questions:

\begin{quote}
\begin{center}
\textit{
In terms of accuracy, how well separated is the shuffled model from the local model?\\
\vspace{.3ex}How close is the shuffled model to the central model?}
\end{center}
\end{quote}

\subsection{Our Results}
In Section \ref{sec:hist}, we provide a new protocol for histograms in the shuffled model. To quantify accuracy, we bound the \emph{simultaneous error}: the maximum difference over all bins between each bin's estimated frequency and its true frequency in the input dataset.

\begin{thm}[Informal]
For any $\eps<1$ and $\delta=o(1/n)$, there exists a shuffled protocol that satisfies $(\eps, \delta)$-differential privacy and reports a histogram with simultaneous error $O(\log(1/\delta)/ (\eps^2 n) )$ with constant probability.
\end{thm}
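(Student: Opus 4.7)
The plan is to construct an explicit multi-message shuffled protocol and then analyze its privacy and utility separately.

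\textbf{Construction.} Each user $i$, holding $x_i \in [d]$, transmits two kinds of messages to the shuffler: a single ``real'' message encoding $x_i$, and a collection of ``blanket'' noise messages drawn from a fixed distribution over $[d]$ that does not depend on $x_i$. A natural choice is, for each bin $j \in [d]$, to independently include a noise message labeled $j$ with probability $p = \Theta(\log(1/\delta)/(\eps^2 n))$. After shuffling, the analyzer tallies messages per bin to obtain a count $c_j$, de-biases by subtracting the expected blanket contribution $np$, and then applies a threshold rule: bins whose de-biased tally falls below $T = \Theta(\log(1/\delta)/\eps^2)$ are reported as frequency $0$; the rest are reported as $(c_j - np)/n$.

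\textbf{Privacy.} I would prove $(\eps, \delta)$-DP by analyzing the effect of changing one user's value from $a$ to $b$. Such a swap shifts exactly one unit from $c_a$ to $c_b$ and leaves all other bin counts unchanged. The blanket noise on each of bins $a$ and $b$ is distributed as $\Bin(n, p)$, which for the chosen $p$ satisfies $(\eps/2, \delta/2)$-DP against a $\pm 1$ shift by standard binomial-noise analysis; basic composition across the two affected bins then yields the overall $(\eps, \delta)$ guarantee. The thresholding step is post-processing of the de-biased counts and does not degrade privacy.

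\textbf{Utility.} For each bin, a Chernoff bound gives $|c_j - \hist_j - np| = O(\sqrt{np \log(1/\beta_0)})$ with probability $1 - \beta_0$. The threshold avoids a $\log d$ factor in simultaneous error: bins with true count below $T/2$ are (with high probability) output as $0$, contributing frequency error at most $T/n = O(\log(1/\delta)/(\eps^2 n))$; bins with true count above $T/2$ number at most $2n/T$, so a union bound over only this small collection lets us set $\beta_0 = \Theta(T/n)$ while retaining constant overall success probability. Since $\delta = o(1/n)$, we have $\log(n/T) = o(\log(1/\delta))$, so $\sqrt{np \log(1/\beta_0)} = O(\log(1/\delta)/\eps^2)$, and dividing by $n$ gives the claimed frequency bound.

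\textbf{Main obstacle.} The chief difficulty is shaping the blanket noise so that on each bin it is large enough to mask a single-user shift (privacy) yet tightly concentrated around its mean (utility), without inflating per-user communication to $\Omega(d)$ messages. The naive ``each bin independently'' construction I sketched is wasteful in communication for large $d$, so a more careful multi-message encoding (e.g., sampling a bounded number of blanket bins per user, or coupling the noise across bins) is likely needed to keep communication manageable while preserving the tight per-bin $(\eps/2, \delta/2)$ noise guarantee required for the composition argument. I expect most of the technical work to lie in verifying that such a refined blanket distribution simultaneously delivers the privacy bound, the concentration needed for the thresholding argument, and reasonable per-user message complexity.
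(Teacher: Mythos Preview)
Your proposal has a genuine gap in the utility argument, and the obstacle you flag (communication) is not the real one.

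The problem is the sentence ``bins with true count below $T/2$ are (with high probability) output as $0$.'' There can be as many as $d$ such bins (in particular, at least $d-n$ bins have true count exactly zero), and you need this event to hold \emph{simultaneously} for all of them. With your choice $p = \Theta(\log(1/\delta)/(\eps^2 n))$, a zero-count bin exceeds the threshold when $\Bin(n,p) - np > T = \Theta(np)$, which happens with probability $\exp(-\Theta(np)) = \exp(-\Theta(\log(1/\delta)/\eps^2))$. A union bound over $d$ empty bins therefore requires $\log d = O(\log(1/\delta)/\eps^2)$, which is not given; $d$ can be arbitrarily large relative to $n$ and $1/\delta$. So the thresholding step, as stated, reintroduces the $\log d$ dependence you are trying to avoid. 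Your split into ``low-count'' and ``high-count'' bins only controls the high-count side.

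The paper's fix is not a sharper concentration bound but a structural change to the noise so that zero-count bins are reported as $0$ \emph{with probability~$1$}, eliminating the union bound over empty bins entirely. Concretely, the per-bin randomizer sends $x+z$ messages with $z\sim\Ber(p)$ where $p = 1 - \Theta(\log(1/\delta)/(\eps^2 n))$ is close to~$1$ (not close to~$0$). The binomial noise is then deterministically at most $n$, so if the true count on a bin is $0$, the observed total is $\le n$ and the analyzer's truncation rule outputs exactly~$0$. The simultaneous-error union bound is then only over the at most $n$ nonzero bins, which is where the $\log n$ (absorbed by $\log(1/\delta)$ under $\delta = o(1/n)$) replaces $\log d$. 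Privacy still comes from two-bin composition, exactly as you describe, because $\Bin(n,p)$ and $\Bin(n,1-p)$ give the same masking.

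Finally, the paper does not attempt to reduce per-user communication below $O(d)$ messages; that is left open and is not needed for the stated theorem, so the ``main obstacle'' you anticipate is orthogonal to the actual difficulty.
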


For comparison, \cite{CSU+19} give a protocol with error $O(\sqrt{\log d \cdot \log 1/\delta} / (\eps n))$. Our protocol has smaller error when $\log (1/\delta) = o(\log d)$. In the natural regime where $\delta= \Theta(\mathit{poly}(1/n))$, that condition is satisfied when $\log n = o(\log d)$.
An example for this setting would be a dataset holding the browser home page of each user.
The data universe could consist of all strings up to a certain length which would far exceed the number of users.

In Section \ref{sec:applications}, we show that the histogram protocol has strong implications for the \emph{distributional} setting. Here, the rows of the dataset are independently drawn from a probability distribution. We focus on the \emph{sample complexity}, which is the number of samples needed to identify or estimate some feature of the distribution. We prove that the separation in sample complexity between the local and shuffled models can be made arbitrarily large:
\begin{thm}[Informal]
\label{thm:sep-non-int}
There is a distributional problem where the sample complexity in the local model scales with a parameter of the problem, but the sample complexity in the shuffled model is independent of that parameter.
\end{thm}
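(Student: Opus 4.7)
My plan is to combine the shuffled histogram protocol of Section \ref{sec:hist} with a known locally private lower bound for a distributional task whose ``hard parameter'' is the domain size $d$. The canonical candidate is \emph{private selection}: given $n$ i.i.d.\ samples from a distribution $P$ on $[d]$ that satisfies a gap promise (an unknown $j^\star \in [d]$ has mass noticeably larger than every other element, say $P_{j^\star} \geq 1/2 + \alpha$ with every other bin of mass at most $(1/2-\alpha)/(d-1)$), output $j^\star$. This task is directly addressable by the histogram protocol on the shuffled side, and classical LDP results give a matching $\log d$-dependent lower bound on the local side. The ``parameter of the problem'' in Theorem \ref{thm:sep-non-int} is $d$.

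For the shuffled upper bound I would feed the $n$ samples through the histogram protocol of Section \ref{sec:hist} with privacy parameters $(\eps, \delta)$, obtaining frequency estimates $\hat{p}_j$ with $\max_j \lvert \hat{p}_j - \hat{q}_j \rvert = O(\log(1/\delta)/(\eps^2 n))$ against the empirical histogram $\hat{q}$, and output $\arg\max_j \hat{p}_j$. A multiplicative Chernoff bound gives $\hat{q}_{j^\star} \geq 1/2 + \alpha/2$ with constant probability once $n = \Omega(1/\alpha^2)$, while for $d$ sufficiently large the maximum empirical frequency among the $d-1$ light bins is a vanishing function of $d$ (because each such bin has mean mass at most $1/d$). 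Combining this with the protocol's $d$-free noise ceiling of $\alpha/4$---which requires $n = \Omega(\log(1/\delta)/(\eps^2 \alpha))$---makes the argmax of $\hat{p}$ equal to $j^\star$ with constant probability, and the overall sample count depends only on $\eps, \delta, \alpha$.

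For the local lower bound I would invoke the existing $\Omega(\log d / (\eps^2 \alpha^2))$ LDP hardness for selection on $[d]$ with gap $\alpha$ (e.g.\ Ullman's tight lower bound, or the Duchi--Jordan--Wainwright framework), which applies to exactly this family of distributions. Letting $d \to \infty$ with $\eps, \delta, \alpha$ fixed drives the LDP sample complexity to infinity while the shuffled complexity remains constant, producing the arbitrarily large separation claimed by the theorem.

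The main obstacle is keeping the shuffled upper bound genuinely $d$-independent: a naive union bound over all $d$ bins reintroduces a $\sqrt{\log d / n}$ term in the sampling step, and balls-in-bins max-load effects can push the maximum empirical light-bin frequency above the gap when $n$ is small and $d$ is comparable to $n$. I would handle this by restricting attention to the regime $d$ large (so that max-load for the light bins collapses to a constant) and by choosing the gap $\alpha$ as a fixed constant, so the hard instance family is narrow enough for the shuffled argument to go through yet still rich enough that the cited LDP selection lower bound applies. Negotiating this tension---matching the promise on both sides so the shuffled algorithm avoids $d$ while the LDP hardness keeps it---is the step that will require the most care.
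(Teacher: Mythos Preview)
Your plan is sound and would prove the theorem, but it takes a different route from the paper. The paper does not use selection; it uses the \emph{pointer-chasing} problem $\mathrm{PC}(2,\ell)$. An instance of $\mathrm{PC}(2,\ell)$ is the uniform distribution over a two-element support $\{(1,a),(2,b)\}$ living in a universe of size $2\cdot\ell!$, so solving it reduces to support identification with $h=2$. Claim~\ref{clm:support-id} then gives a shuffled sample complexity of $O((1/\eps^{2})\log(1/\delta))$, and the paper invokes the $\Omega(\ell/e^{\eps})$ non-interactive local lower bound of \cite{JMR19} on exactly this problem. The ``parameter'' is $\ell$, not $d$.

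The trade-offs: the paper's choice sidesteps precisely the obstacle you identify as the hard part. Because the hard distribution has support of size two, every out-of-support element has empirical count \emph{exactly} zero, so the zero-count guarantee (Theorem~\ref{thm:zerocount} Part~\textit{iii}) kills all $d-2$ irrelevant bins outright; no max-load or union-bound argument over light bins is needed, and the coupon-collector step is trivial for $h=2$. It also yields a quantitatively stronger separation ($\Omega(\ell)$ versus $O(1)$, rather than $\Omega(\log d)$ versus $O(1)$), and the cited local lower bound is already stated for the identical instance family, so there is no ``negotiating the tension'' between the upper- and lower-bound promises. Your selection route is more canonical and reuses textbook LDP bounds, but---as you correctly anticipate---you must pin down a regime of $d$ (relative to $n$) where the light-bin empirical maximum stays below the gap while the LDP selection lower bound you cite still applies to that same promise class; this is doable but is extra work the paper's reduction simply avoids.
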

We also show that there is a distributional problem which requires polynomially more samples in the \emph{sequentially interactive} local model than in the shuffled model. This is done by reusing the techniques to prove Theorem \ref{thm:sep-non-int}.

A natural conjecture is that there are progressively weaker versions of Theorem \ref{thm:sep-non-int} for progressively constrained versions of the model. In Section \ref{sec:pure}, we prove that the shuffled model collapses to the local model when constraints are too strong:
\begin{thm}[Informal]
\label{thm:need-delta}
For every single-message shuffled protocol that satisfies pure differential privacy, there is a local protocol with exactly the same privacy and sample complexity guarantees.
\end{thm}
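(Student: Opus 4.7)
The plan is to reduce the theorem to the following key lemma: every individual randomizer $R_i$ in a single-message, $(\eps, 0)$-DP shuffled protocol must itself be $(\eps, 0)$-differentially private as a local randomizer. Once this lemma is in hand, the local protocol is the natural simulation of the shuffled one: each user $i$ computes and sends $R_i(x_i)$ directly to the analyzer, who then applies a uniformly random permutation to the received tuple and feeds it to the shuffled protocol's analyzer $A$. The resulting transcript has the same distribution as in the shuffled protocol, so sample complexity for any task is preserved exactly, and pure local DP follows because each $R_i$ is $\eps$-DP.

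To establish the lemma I argue by contrapositive. Suppose some $R_i$ fails $(\eps, 0)$-DP, so there exist neighboring inputs $a, b$ and a single-message output $m^*$ with $q := \Pr[R_i(a) = m^*] > e^\eps \cdot \Pr[R_i(b) = m^*] =: e^\eps q'$ (swapping $a$ and $b$ if necessary, take $q > 0$). I will construct neighboring datasets distinguishing the shuffler's output by more than a factor of $e^\eps$. For each $j \neq i$, let $p_j^{\max} = \sup_{x_j} \Pr[R_j(x_j) = m^*]$ and define $J = \{j \neq i : p_j^{\max} > 0\}$; for each $j \in J$ fix $x_j^\star$ (nearly) attaining the supremum, and for $j \notin J$ fix $x_j$ arbitrarily. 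Let $E$ be the event that the shuffler's output multiset contains exactly $|J| + 1$ copies of $m^*$. Since users outside $J \cup \{i\}$ cannot emit $m^*$, the event $E$ occurs iff every user in $J \cup \{i\}$ does, and by independence
\begin{align*}
\Pr[E \mid x_i = a] &= q \cdot \prod_{j \in J} \Pr[R_j(x_j^\star) = m^*], \\
\Pr[E \mid x_i = b] &= q' \cdot \prod_{j \in J} \Pr[R_j(x_j^\star) = m^*].
\end{align*}
The product is strictly positive by construction of $J$, so the ratio is exactly $q/q' > e^\eps$, contradicting pure DP of the shuffler's output (and hence of the full protocol, by post-processing).

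The main obstacle is ensuring the distinguishing event is both well-defined and observable in the shuffled view. A naive choice such as ``all $n$ messages equal $m^*$'' collapses to probability zero on both sides whenever some $R_j$ cannot emit $m^*$, killing the distinguishing ratio. Restricting to the set $J$ of users whose randomizers \emph{can} produce $m^*$ sidesteps this, and since the event depends only on the multiset of messages, it is exactly what the analyzer sees after shuffling. The boundary case $q' = 0$ is handled automatically by the same argument, which then yields an infinite ratio and an immediate violation of pure DP.
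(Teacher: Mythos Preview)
Your proof is correct and follows the same two-step structure as the paper: first show the randomizer itself must be $\eps$-differentially private, then simulate the shuffled protocol locally by having the analyzer apply the random permutation before running $\cA$. The paper's version of the key lemma is a bit cleaner because it exploits the model's assumption that all users share the same randomizer $\cR$: it takes the violating \emph{set} $Y$ directly, sets every user's input to $x$ (versus one to $x'$), and uses the event $Y^n$, which sidesteps both your reduction to a single message $m^*$ and the partition into the set $J$ of users able to emit it.
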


\subsection{Related Work}
Table \ref{tab:main-results} presents our histogram result alongside existing results for the problem---all previous bounds on simultaneous error in the shuffled model depend on $d$.
Although we focus on simultaneous error, error metrics focusing on per-bin error are also used in the literature such as mean squared error (MSE) and high probability confidence intervals on each bin.
When considering these alternative metrics or when $d$ is not large, other histogram protocols may outperform ours (see e.g. \cite{WXD+19}).
\begin{table}
\renewcommand*{\arraystretch}{1.8}
\caption{Comparison of results for the histogram problem. To simplify the presentation, we assume constant success probability, $\eps<1$, $\delta < 1/\log d$ for results from \cite{GGK+19}, and $e^{-O(n\eps^2)} \le \delta < 1/n$ for our result.\vspace{1ex}}
    \label{tab:main-results}
    \centering
    \begin{tabular}{cccc}
     Model & Simultaneous Error & No. Messages per User & Source\\ \Xhline{2.5\arrayrulewidth}
     Local &
     $ \Theta\!\left( \frac{1}{\eps \sqrt{n\,}}\cdot\sqrt{\log d\,} \right)$ &
     1 &
     \cite{BS15}\vspace{2.2pt} \\ \hline
     \multirow{4}{*}{Shuffled} &
     $ O\!\left(\frac{1}{\eps n} \cdot\sqrt{\log d \cdot\log \frac{1}{\delta}\,} \right)$ &
     $O(d)$ &
     \cite{CSU+19} \\ 
      & 
     $O\!\left(\frac{1}{\eps n}\sqrt{\log^3 d \cdot \log (\frac{1}{\delta} \log d)}\,\right)$ &
     $O\!\left(\frac{1}{\eps^2}\log^3 d \cdot \log (\frac{1}{\delta} \log d)\right)$ w.h.p. &
     \cite{GGK+19} \\
      &
     $O\!\left(\frac{\log d}{n} + \frac{1}{\eps n} \cdot\sqrt{\log d \cdot\log \frac{1}{\eps \delta} \,}\right)$ &
     $O\!\left(\frac{1}{\eps^2}\log\frac{1}{\eps\delta}\right)$ &
     \cite{GGK+19} \\
      &
     $O\!\left(\frac{1}{\eps^2 n} \log \frac{1}{\delta} \right)$ &
     $O(d)$ &
     [Theorem \ref{thm:hist}]\vspace{2.2pt} \\ \hline
     Central &
     $ \Theta\!\left(\frac{1}{\eps n} \min\left(\log d, \log \frac{1}{\delta}\right) \right)$ &
     N/A &
     \cite{DMNS06,BNS16,BBKN14,HT10} \\ 
     \end{tabular}
\end{table}

Quantitative separations between the local and shuffled models exist in the literature \cite{CSU+19, BBGN19-2, GGK+19, GGK+20}. As a concrete example, \cite{CSU+19} implies that the sample complexity of Bernoulli mean estimation in the shuffled model is $O(1/\alpha^2 + \log(1/\delta)/(\alpha\eps))$. In contrast, \cite{BNO08} gives a lower bound of $\Omega(1/\alpha^2\eps^2)$ in the local model.

Prior work have shown limits of the shuffled model, albeit under communication constraints. The first set of results follow from a lemma in \cite{CSU+19}: a single-message shuffled protocol implies a local protocol with a weaker differential privacy guarantee. Specifically, if the shuffled protocol obeys $(\eps,\delta)$-differential privacy, then the local protocol obeys $(\eps+\ln n, \delta)$-differential privacy. Lower bounds for the local model can then be invoked, as done in \cite{CSU+19,GGK+19}.

Another class of lower bound comes from a lemma in \cite{GGK+20}: when a shuffled protocol obeys $\eps$-differential privacy and bounded communication complexity, the set of messages output by each user is insensitive to their personal data. Specifically, changing their data causes the set's distribution to change by $\leq 1-2^{-O_\eps(m^2\ell)}$ in statistical distance, where $m$ denotes number of messages and $\ell$ the length of each message. This is strong enough to obtain a lower bound on binary sums. 

The amplification-by-shuffling lemmas in \cite{BBGN19,EFMRTT19} show that uniformly permuting the messages generated by a local protocol improves privacy guarantees: an $\eps$-private local protocol becomes an $(\eps',\delta)$-private shuffled protocol where $\eps'\ll \eps$ and $\delta>0$. One might conjecture weaker versions of these lemmas where $\delta=0$ but Theorem \ref{thm:need-delta} eliminates that possibility.

\section{Preliminaries}
We define a \emph{dataset} $\vec{x} \in \cX^n$ to be an ordered tuple of $n$ rows where each row is drawn from a \emph{data universe} $\cX$ and corresponds to the data of one user.
Two datasets $\vec{x},\vec{x}\,' \in \cX^n$ are considered \emph{neighbors} (denoted as $\vec{x} \sim \vec{x}\,'$) if they differ in exactly one row.

\begin{defn}[Differential Privacy \cite{DMNS06}]
An algorithm $\cM: \cX^n \rightarrow \cZ$ satisfies \emph{$(\eps, \delta)$-differential privacy} if
\begin{align*}
\forall \vec{x} \sim \vec{x}\,' ~~\forall T \subseteq \cZ ~~~ \Pr[\cM(\vec{x}) \in T] \le e^\eps \cdot \Pr[\cM(\vec{x}\,') \in T] + \delta.
\end{align*}

We say an $(\eps, \delta)$-differentially private algorithm satisfies \emph{pure differential privacy} when $\delta = 0$ and \emph{approximate differential privacy} when $\delta > 0$.
For pure differential privacy, we may omit the $\delta$ parameter from the notation.
\end{defn}

\begin{defn}[Local Model \cite{KLNRS08}]
A protocol $\cP$ in the \emph{(non-interactive\footnote{The literature also includes interactive variants; see \cite{JMNR19} for a definition of \emph{sequential} and \emph{full} interactivity.}) local model} consists of two randomized algorithms:
\begin{itemize}
    \item A \emph{randomizer} $\cR:\cX \rightarrow \cY$ that takes as input a single user's data and outputs a message.
    \item An \emph{analyzer} $\cA:\cY^* \rightarrow \cZ$ that takes as input all user messages and computes the output of the protocol.
\end{itemize}
We denote the protocol $\cP = (\cR, \cA)$. 
We assume that the number of users $n$ is public and available to both $\cR$ and $\cA$.
Let $\vec{x} \in \cX^n$.
The evaluation of the protocol $\cP$ on input $\vec{x}$ is
The evaluation of the protocol $\cP$ on input $\vec{x}$ is
\begin{align*}
\cP(\vec{x})
=  (\cA \circ\cR)(\vec{x})
= \cA(\cR(x_1), \ldots, \cR(x_n)).
\end{align*}
\end{defn}

\begin{defn} [Differential Privacy for Local Protocols]
A local protocol $\cP = (\cR, \cA)$ satisfies \emph{$(\eps, \delta)$-differential privacy} for $n$ users if its randomizer $\cR: \cX \rightarrow \cY$ is $(\eps, \delta)$-differentially private (for datasets of size one).
\end{defn}

\begin{defn}[Shuffled Model \cite{BittauEMMRLRKTS17, CSU+19}]
A protocol $\cP$ in the \emph{shuffled model} consists of three randomized algorithms:
\begin{itemize}
\item
    A \emph{randomizer} $\cR: \cX \rightarrow \cY^*$ that takes as input a single user's data and outputs a vector of messages whose length may be randomized. If, on all inputs, the probability of sending a single message is 1, then the protocol is said to be \emph{single-message}. Otherwise, the protocol is said to be \emph{multi-message}.
\item
    A \emph{shuffler} $\cS: \cY^* \rightarrow \cY^*$ that concatenates all message vectors and then applies a uniformly random permutation to (the order of) the concatenated vector. For example, when there are three users each sending two messages, there are $6!$ permutations and all are equally likely to be the output of the shuffler.
\item
    An \emph{analyzer} $\cA: \cY^* \rightarrow \cZ$ that takes a permutation of messages to generate the output of the protocol.
\end{itemize}

As in the local model, we denote the protocol $\cP = (\cR, \cA)$ and assume that the number of users $n$ is accessible to both $\cR$ and $\cA$.
The evaluation of the protocol $\cP$ on input $\vec{x}$ is
\begin{align*}
\cP(\vec{x})
= (\cA \circ \cS\circ\cR)(\vec{x})
= \cA(\cS(\cR(x_1), \ldots, \cR(x_n))).
\end{align*}
\end{defn}

\begin{defn} [Differential Privacy for Shuffled Protocols \cite{CSU+19}]
A shuffled protocol $\cP = (\cR, \cA)$ satisfies \emph{$(\eps, \delta)$-differential privacy} for $n$ users if the algorithm $(\cS\circ \cR): \cX^n \rightarrow \cY^*$ is $(\eps, \delta)$-differentially private.
\end{defn}

\medskip

We note a difference in robustness between the local and shuffled models.
A user in a local protocol only has to trust that their own execution of $\cR$ is correct to ensure differential privacy.
In contrast, a user in a shuffled protocol may not have the same degree of privacy when other users deviate from the protocol.

\medskip

For any $d \in \N$, let $[d]$ denote the set $\{1, \ldots, d\}$.
For any $j \in [d]$, we define the function $c_j: [d]^n \rightarrow \R$ as the normalized count of $j$ in the input:
\begin{align*}
c_j(\vec{x}) = (1/n)\cdot|\{i \in [n] \,:\, x_i = j\}|.
\end{align*}
We use \emph{histogram} to refer to the vector of normalized counts $(c_1(\vec{x}), \dots, c_d(\vec{x}))$. 
For measuring the accuracy of a histogram protocol $\cP: [d]^n \rightarrow \R^d$, we use the following metrics:

\begin{defn}
A histogram protocol $\cP: [d]^n \rightarrow \R^d$ has \emph{$(\alpha, \beta)$-per-query accuracy} if
\begin{align*}
\forall \vec{x} \in [d]^n ~~ \forall j \in [d] ~~~ \Pr[|\cP(\vec{x})_j - c_j(\vec{x})| \le \alpha] \ge 1 - \beta.
\end{align*}
\end{defn}

\begin{defn}
A histogram protocol $\cP: [d]^n \rightarrow \R^d$ has \emph{$(\alpha, \beta)$-simultaneous accuracy} if
\begin{align*}
 \forall \vec{x} \in [d]^n ~~~ \Pr[\forall j \in [d] ~~~ | \cP(\vec{x})_j - c_j(\vec{x}) | \le \alpha] \ge 1 - \beta.
\end{align*}
\end{defn}

\section{The Power of Multiple Messages for Histograms}
\label{sec:hist}

In this section, we present an $(\eps,\delta)$-differentially private histogram protocol in the shuffled model whose simultaneous error does not depend on the universe size. We start by presenting a private protocol for releasing a binary sum that always outputs 0 if the true count is 0 and otherwise outputs a noisy estimate. The histogram protocol uses this counting protocol to estimate the frequency of every domain element. Its simultaneous error is the maximum noise introduced to the nonzero counts. There are at most $n$ such counts.

For comparison, a protocol in \cite{CSU+19} adds independent noise to all counts without the zero-error guarantee. The simultaneous error is therefore the maximum noise over \emph{all} $d$ counts, which introduces a $\log d$ instead of a $\log n$ factor.

\subsection{A Two-Message Protocol for Binary Sums}

In the protocol $\cP^\zc_{\eps,\delta}$ (Figure \ref{fig:zsum}), each user reports a vector whose length is the sum of their data and a Bernoulli random variable. The contents of each vector will be copies of 1. Because the shuffler only reports a uniformly random permutation, the observable information is equivalent to the sum of user data, plus noise. The noise is distributed as $\Bin(n,p)$, where $p$ is chosen so that there is sufficient variance to ensure $(\eps,\delta)$-differential privacy. We take advantage of the fact that the binomial distribution is bounded: if the sum of the data is zero, the noisy sum is \emph{never} more than $n$. Hence, the analyzer will perform truncation when the noisy sum is small. We complete our proof by arguing that it is unlikely for large values to be truncated.

To streamline the presentation and analysis, we assume that $\sqrt{(100/n)\cdot\ln (2/\delta)} \leq \eps \leq 1$ so that $p \in (1/2, 1)$. We can achieve $(\eps,\delta)$ privacy for a broader parameter regime by setting $p$ to a different function; we refer the interested reader to Theorem 4.11 in \cite{CSU+19}.

\begin{figure}[ht]
\caption{The pseudocode for $\cP^\zc_{\eps,\delta}$, a private shuffled protocol for normalized binary sums}
\label{fig:zsum}

\begin{mdframed}
\textbf{Randomizer} $\cR^\zc_{\eps,\delta}(x \in \{0, 1\})$ for $\eps,\delta \in [0,1]$\textbf{:}
\begin{enumerate}
\item Let $p \gets 1 - \frac{50}{\eps^2n}\ln (2/\delta)$.

\item
    Sample $z \sim \Ber(p)$.
\item
	Output $(\underbrace{1,\dots,1}_{x+z~\textrm{copies}})$.
\end{enumerate}

\textbf{Analyzer} $\cA^\zc_{\eps,\delta}(\vec{y} \in \{1\}^*)$ for $\eps,\delta \in [0,1]$\textbf{:}
\begin{enumerate}
\item Let $p \gets 1 - \frac{50}{\eps^2n}\ln (2/\delta)$.
\item
	Let $c^* = \frac{1}{n}\cdot |\vec{y}|$, where $|\vec{y}|$ is the length of $\vec{y}$.
\item
	Output $\begin{cases}
	c^*-p & \text{if $c^* > 1$}\\
	0 & \text{otherwise}
	\end{cases}$.
\end{enumerate}
\end{mdframed}
\end{figure}

\begin{thm}\label{thm:zerocount}
For any $\eps, \delta \in [0,1]$ and any $n \in \N$ such that $n \ge (100/\eps^2) \cdot \ln(2/\delta)$, the protocol $\cP_{\eps,\delta}^\zc = (\cR^\zc_{\eps,\delta}, \cA^\zc_{\eps,\delta})$ has the following properties:
\begin{enumerate}[label=\roman*.]
\item
	$\cP_{\eps,\delta}^\zc$ is $\left(\eps,\delta\right)$-differentially private in the shuffled model.
\item
	For every $\beta\ge \delta^{25}$ , the error is $|\cP_{\eps,\delta}^\zc(\vec{x})-\frac{1}{n}\sum x_i| \leq \alpha$ with probability $\geq 1-\beta$ where 
	\begin{align*}
	\alpha
	&= \frac{50}{\eps^2n} \log\frac{2}{\delta} + \frac{1}{\eps n} \cdot \sqrt{200 \log \frac{2}{\delta}\cdot\log \frac{2}{\beta} }\\
	&= O\left(\frac{1}{\eps^2n} \log\frac{1}{\delta} \right).
	\end{align*}
\item
	$\cP_{\eps,\delta}^\zc(\underbrace{(0,\dots,0)}_{n\textrm{ copies}})=0$.
\item
    Each user sends at most two one-bit messages.
\end{enumerate}
\end{thm}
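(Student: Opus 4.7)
The core observation is that because every message is a copy of the bit $1$, the output of $\cS \circ \cR^\zc_{\eps,\delta}$ on $\vec{x}$ is information-theoretically determined by the integer $n c(\vec{x}) + Y$, where $Y \sim \Bin(n, p)$ and $c(\vec{x}) = \frac{1}{n}\sum_i x_i$. All four conclusions reduce to properties of this shifted-binomial view.

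For part (i), neighboring datasets shift $n c(\vec{x})$ by exactly one, so it suffices to show that $\Bin(n, p)$ and $1 + \Bin(n, p)$ are $(\eps, \delta)$-indistinguishable. Writing $q = 1 - p = (50/(\eps^2 n))\ln(2/\delta)$ and using $\Bin(n, p) \stackrel{d}{=} n - \Bin(n, q)$, this reduces to the analogous indistinguishability for $\Bin(n, q)$ under a unit shift. With $n q = 50 \ln(2/\delta)/\eps^2$, I would bound the pmf ratio $\Pr[\Bin(n,q) = k]/\Pr[\Bin(n,q) = k-1] = (n-k+1) q / (k (1-q))$ and apply a Chernoff tail on $\Bin(n, q)$ to show that the set of $k$ on which the ratio exceeds $e^\eps$ has total mass at most $\delta$. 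This is essentially the analysis of Theorem 4.11 in \cite{CSU+19}, specialized to the regime $\sqrt{(100/n)\ln(2/\delta)} \le \eps \le 1$ so that $p \in (1/2, 1)$; I would either invoke that result directly or reprise the short pmf-ratio calculation.

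For part (ii), set $\mathrm{dev} = (1/\eps)\sqrt{200 \ln(2/\delta)\ln(2/\beta)}$, so that $n \alpha = n(1-p) + \mathrm{dev}$. A Bernstein-type bound applied to $\Bin(n, q) = n - Y$ gives $|Y - np| \le \mathrm{dev}$ with probability at least $1 - \beta$; call this the good event. On the good event I case on $c(\vec{x})$. If $c(\vec{x}) \ge \alpha$, then $Y \ge np - \mathrm{dev} \ge n - n\alpha \ge n(1 - c(\vec{x}))$, forcing $c^* \ge 1$, so the analyzer returns $c^* - p$ with error $|Y - np|/n \le \mathrm{dev}/n \le \alpha$. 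If $c(\vec{x}) < \alpha$, the output is either $0$ (error $c(\vec{x}) < \alpha$) or $c^* - p$ (error $\mathrm{dev}/n \le \alpha$). Part (iii) is immediate: when $\vec{x} = 0^n$ each user sends $z_i \in \{0,1\}$ messages, so $|\vec{y}| \le n$, $c^* \le 1$, and the analyzer outputs $0$. Part (iv) follows from $x + z \le 2$ with each message being the bit $1$.

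The main obstacle is part (i): turning the pmf-ratio bound for $\Bin(n, q)$ into a clean $(\eps, \delta)$ guarantee with the stated constants requires a careful Chernoff tail computation on the set of ``bad'' $k$, and the choice $1 - p = 50 \ln(2/\delta)/(\eps^2 n)$ is precisely what makes both the privacy constant in (i) and the error bound in (ii) fit together. Parts (ii)--(iv) are routine combinations of Chernoff and casework, with the only subtlety being to verify that truncation at $c^* \le 1$ cannot silently inflate the error in the low-count regime; the case analysis above is what rules that out.
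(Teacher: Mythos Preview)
Your proposal is correct and follows essentially the same approach as the paper: reduce the shuffled output to a shifted binomial, handle privacy via the indistinguishability of unit-shifted binomials, and handle accuracy via a Chernoff bound plus casework on whether the analyzer truncates. The only notable difference is that for Part~(i) the paper defers to the smooth-distribution framework of \cite{GGK+19} (Appendix~\ref{apdx:privacy}) rather than the pmf-ratio argument of \cite{CSU+19}, and for Part~(ii) the paper cases on whether $c^*>1$ rather than on whether $c(\vec{x})\ge\alpha$; both variants yield the same bounds.
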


\begin{proof}[Proof of Part \textit{i}]
If we let $z_i$ be the random bit generated by the $i$-th user, the total number of messages is $|\vec{y}| = \sum_{i=1}^n x_i + z_i$.
Observe that learning $|\vec{y}|$ is sufficient to represent the output of shuffler since all messages have the same value.
Thus, the privacy of this protocol is equivalent to the privacy of
\begin{align*}
\cM(\vec{x})
&= \sum_{i=1}^{n}x_i + \mathrm{Bin}(n, p)
\sim -\left(-\sum_{i=1}^{n}x_i + \mathrm{Bin}(n,1-p)\right)+n.
\end{align*}
By post-processing, it suffices to show the privacy of $\cM_\text{neg}(\vec{x}) = -\sum_{i=1}^{n}x_i + \mathrm{Bin}(n,1-p)$ where $1-p = \frac{50}{\eps^2n}\ln \frac{2}{\delta}$. Because privacy follows almost immediately from technical claims in \cite{GGK+19}, we defer the proof to Appendix \ref{apdx:privacy}.
\end{proof}

\begin{proof}[Proof of Part \textit{ii}]
Fix any $\vec{x} \in \zo^n$. For shorthand, we define $\alpha' = 2\cdot\sqrt{\frac{p(1-p)}{n}\cdot\ln(2/\beta)}$ so that $\alpha = (1-p) + \alpha'$. A Chernoff bound implies that for $\beta \ge 2e^{-np(1-p)}$, the following event occurs with probability $\ge 1-\beta$:
\begin{equation}
\label{eq:noise-concentration}
    \left|\frac{1}{n}\cdot\sum_{i=1}^n z_i - p\right| \le \alpha'
\end{equation}
The inequality $\beta \ge 2e^{-np(1-p)}$ follows from our bounds on $\eps$, $\beta$, and $n$.

The remainder of the proof will condition on \eqref{eq:noise-concentration}. If $c^*>1$, then the analyzer outputs $c^*-p$. We show that the error of $c^*-p$ is at most $\alpha'$:
\begin{align*}
\left|(c^* - p) - \frac{1}{n}\cdot\sum_{i=1}^{n}x_i\right| &= \left| \frac{1}{n}\cdot\sum_{i=1}^{n} (x_i + z_i) - p - \frac{1}{n}\cdot\sum_{i=1}^{n}x_i \right| \tag{By construction}\\
    &= \left|\frac{1}{n}\cdot\sum_{i=1}^{n}z_i - p\right|\\
    &\le \alpha' \tag{By \eqref{eq:noise-concentration}}
\end{align*}

If $c^*\leq 1$, then the analyzer will output 0. In this case, the error is exactly $\frac{1}{n}\sum x_i$. We argue that $c^* \leq 1$ implies $\frac{1}{n}\sum x_i \leq \alpha$.
\begin{align*}
1 &\geq c^*\\
    &= \frac{1}{n}\cdot\sum_{i=1}^{n} (x_i + z_i) \tag{By construction}\\
    &\geq \frac{1}{n}\cdot\sum_{i=1}^{n} x_i + p - \alpha' \tag{By \eqref{eq:noise-concentration}}
\end{align*}
Rearranging terms yields
\begin{align*}
\frac{1}{n}\cdot\sum_{i=1}^{n} x_i &\leq (1-p) +\alpha' = \alpha
\end{align*}
which concludes the proof.
\end{proof}

\begin{proof}[Proof of Part \textit{iii}]
If $\vec{x} = (0,\ldots,0)$, then $|\vec{y}|$ is drawn from $0+\Bin(n,p)$, which implies $c^* \le 1$ with probability 1. Hence, $\cP_{\eps,\delta}^\zc(\vec{x})=0$.
\end{proof}

\subsection{A Multi-Message Protocol for Histograms}
In the protocol $\cP^\hist_{\eps,\delta}$ (Figure \ref{fig:hist}), users encode their data $x_i \in [d]$ as a one-hot vector $\vec{b} \in \zo^d$. Then protocol $\cP^\zc_{\eps,\delta}$ is executed on each coordinate $j$ of $\vec{b}$. The executions are done in one round of shuffling. To remove ambiguity between executions, each message in execution $j$ has value $j$.

\begin{figure}[ht]
\caption{The pseudocode for $\cP^\hist_{\eps,\delta}$, a private shuffled protocol for histograms}
\label{fig:hist}

\begin{mdframed}
\textbf{Randomizer} $\cR^\hist_{\eps,\delta}(x \in [d])$ for $\eps,\delta \in [0, 1]$\textbf{:}
\begin{enumerate}
\item For each $j \in [d]$, let $b_j \gets \mathbbm{1}[x = j]$ and compute scalar product $\vec{m}_j \gets j\cdot \cR^\zc_{\eps,\delta}(b_j)$.
\item
	Output the concatenation of all $\vec{m}_j$.
\end{enumerate}

\textbf{Analyzer} $\cA^\hist_{\eps,\delta}(\vec{y} \in [d]^*)$ for $\eps,\delta \in [0, 1]$\textbf{:}
\begin{enumerate}
\item
	For each $j \in [d]$, let $\vec{y}_{(j)} \gets $ all messages of value $j$, then compute $\tilde{c}_j \gets \cA^\zc_{\eps,\delta}(\vec{y}_{(j)})$.
\item
	Output $(\tilde{c}_1, \dots, \tilde{c}_d)$.
\end{enumerate}
\end{mdframed}
\end{figure}

\begin{thm}
\label{thm:hist}
For any $\eps,\delta \in [0,1]$ and any $n \in \N$ such that $n \ge (100/\eps^2) \cdot \ln(2/\delta)$, the protocol $\cP^\hist_{\eps,\delta} = (\cR^\hist_{\eps,\delta}, \cA^\hist_{\eps,\delta})$ has the following properties:
\begin{enumerate}[label=\roman*.]
\item
	$\cP^\hist_{\eps,\delta}$ is $\left(2\eps,2\delta\right)$-differentially private in the shuffled model.
\item
	For every $\beta\ge \delta^{25}$, $\cP^\hist_{\eps,\delta}$ has $(\alpha, \beta)$-per-query accuracy for 
	\begin{align*}
	\alpha = O\left(\frac{1}{\eps^2n} \log\frac{1}{\delta} \right).
	\end{align*}
\item
	For every $\beta\ge n\cdot\delta^{25}$, $\cP^\hist_{\eps,\delta}$ has $(\alpha, \beta)$-simultaneous accuracy for 
	\begin{align*}
	\alpha = O\left(\frac{1}{\eps^2n} \log\frac{1}{\delta} \right).
	\end{align*}
\item
    Each user sends at most $1+d$ messages each of length $O(\log d)$.
\end{enumerate}
\end{thm}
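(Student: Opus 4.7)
The plan is to decompose the analysis into the four parts of the statement and to reduce everything to the guarantees of $\cP^\zc_{\eps,\delta}$ already established in Theorem~\ref{thm:zerocount}. The key structural observation is that messages are tagged with their coordinate $j$, so the shuffler's output (a uniformly random permutation of all messages) can be equivalently described as independently shuffling each value-class $j$ and then interleaving; the interleaving is post-processing by the analyzer, which simply partitions by value. Consequently, the restriction of the shuffled view to value $j$ is distributed exactly as $\cS(\cR^\zc_{\eps,\delta}(b_{j,1}),\ldots,\cR^\zc_{\eps,\delta}(b_{j,n}))$, where $b_{j,i}=\mathbbm{1}[x_i=j]$, and $\tilde c_j$ is the output of an independent run of $\cP^\zc_{\eps,\delta}$ on the bit-vector of the $j$-th coordinate.

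For part~\textit{i}, fix neighbors $\vec{x}\sim\vec{x}\,'$ differing in user $i$ with $x_i=j$ and $x_i'=j'$. The one-hot encoding changes only in coordinates $j$ and $j'$, so the marginal distributions of value-classes $k\notin\{j,j'\}$ are identical on $\vec{x}$ and $\vec{x}\,'$, while the marginals on classes $j$ and $j'$ each change by a single-row flip. By the decomposition above and Theorem~\ref{thm:zerocount}(\textit{i}), each of these two marginals is $(\eps,\delta)$-indistinguishable; basic composition across the two affected coordinates (the other $d-2$ classes being identically distributed and thus post-processing) yields $(2\eps,2\delta)$-differential privacy.

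For parts~\textit{ii} and \textit{iii}, per-query accuracy is immediate from Theorem~\ref{thm:zerocount}(\textit{ii}) applied to the $j$-th coordinate. The simultaneous bound exploits Theorem~\ref{thm:zerocount}(\textit{iii}): whenever $c_j(\vec{x})=0$, the bit-vector is all-zero, so $\tilde c_j=0$ deterministically and contributes no error. At most $n$ of the $d$ coordinates have nonzero count, so a union bound over just those coordinates, each run at failure probability $\beta/n$, produces a simultaneous error of $O(\log(1/\delta)/(\eps^2 n))$ with probability $\ge 1-\beta$; the condition $\beta/n\ge \delta^{25}$ matches the hypothesis $\beta\ge n\cdot\delta^{25}$.

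Part~\textit{iv} is a direct count: for $j=x_i$ the user emits $1+z_{i,j}\in\{1,2\}$ copies of the label $j$, and for each $j\ne x_i$ the user emits $z_{i,j}\in\{0,1\}$ copies, giving at most $1+d$ messages total, each an element of $[d]$ and hence of length $O(\log d)$. The only nontrivial step is the shuffler factorization used in part~\textit{i}; everything else is bookkeeping on top of Theorem~\ref{thm:zerocount}.
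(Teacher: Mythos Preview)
Your proposal is correct and follows essentially the same approach as the paper: factor the shuffled view by value class, note that neighboring inputs differ in exactly two classes so composition of two invocations of Theorem~\ref{thm:zerocount}(\textit{i}) gives $(2\eps,2\delta)$-privacy, derive per-query accuracy directly from Theorem~\ref{thm:zerocount}(\textit{ii}), and obtain simultaneous accuracy by a union bound over the at most $n$ nonzero coordinates using Theorem~\ref{thm:zerocount}(\textit{iii}) to kill the rest. The message count in part~\textit{iv} is also argued the same way.
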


The accuracy guaranteed by this protocol is close to what is possible in the central model: there is a stability-based algorithm with simultaneous error $O((1/(\eps n)) \cdot \ln (1/\delta))$ \cite{BNS16}.
However, in $\cP^\hist_{\eps,\delta}$, each user communicates $O(d)$ messages of $O(\log d)$ bits.
It remains an open question as to whether or not this can be improved while maintaining similar accuracy.

Because the simultaneous error of a \emph{single-message} histogram protocol is at least $\Omega((1/(\eps n)) \cdot \poly(\log d))$ \cite{CSU+19}, this protocol is also proof that the single-message model is a strict subclass of the multi-message model. This separation was previously shown by \cite{BBGN19, BBGN19-2} for the summation problem.\footnote{In particular, a private unbiased estimator for  $\sum_i x_i$ with real-valued $x_i \in [0,1]$ in the single-message shuffled model must have error $\Omega(n^{1/6})$ \cite{BBGN19} while there exists a multi-message shuffled model protocol for estimating summation with error $O(1/\eps)$ \cite{BBGN19-2}.}

\begin{proof}[Proof of Part \textit{i}]
Fix any neighboring pair of datasets $\vec{x}\sim \vec{x}\,'$.
Let $\vec{y} \gets (\cS \circ \cR^\hist_{\eps,\delta}) (\vec{x})$ and $\vec{y}\,' \gets (\cS \circ \cR^\hist_{\eps,\delta}) (\vec{x}\,')$.
For any $j \neq j'$, the count of $j$ in output of the shuffler is independent of the count of $j'$ in the output because each execution of $\cR^\zc_{\eps,\delta}$ is independent.
As in Step (1) of $\cA_{\eps,\delta}^\hist$, for $j \in [d]$, let $\vec{y}_{(j)}$ ($\vec{y}\,'_{(j)}$ resp.) be the vector of all messages in $\vec{y}$ ($\vec{y}\,'$ resp.) that have value $j$.

For any $j \in [d]$ where $c_j(\vec{x}) = c_j(\vec{x}\,')$, $\vec{y}_{(j)}$ is identically distributed to $\vec{y}\,'_{(j)}$. For each of the two $j \in [d]$ where $c_j(\vec{x}) \neq c_j(\vec{x}\,')$, we will show that the distribution of $\vec{y}_{(j)}$ is close to that of $\vec{y}\,'_{(j)}$.
Let $\vec{r},\vec{r}\,'\in\zo^n$ where $r_i = \mathbbm{1}[x_i= j]$ and $r'_i = \mathbbm{1}[x'_i= j]$.
Now,
\begin{align*}
\vec{y}_{(j)} \sim j\cdot (\cS \circ \cR^\zc_{\eps,\delta}) (\vec{r})\text{~~and~~}\vec{y}\,'_{(j)} \sim j\cdot(\cS \circ \cR^\zc_{\eps,\delta}) (\vec{r}\,').
\end{align*}
So by Theorem \ref{thm:zerocount} Part \textit{i}, for any $T \subseteq \{j\}^*$,
\begin{align*}
\Pr[\vec{y}_{(j)} \in T]
&\leq e^\eps \cdot\Pr[ \vec{y}\,'_{(j)} \in T] + \delta.
\end{align*}
$(2\eps,2\delta)$-differential privacy follows by composition.
\end{proof}

\begin{proof}[Proof of Part \textit{ii}-\textit{iii}]
Notice that the $j$-th element in the output $\tc_j$ is identically distributed with an execution of the counting protocol on the bits $b_{i,j}$ indicating if $x_i=j$. Formally, $\tc_j \sim \cP^\zc_{\eps,\delta}(\{b_{i,j}\}_{i \in [n]})$ for all $j \in [d]$.
Per-query accuracy immediately follows from Theorem \ref{thm:zerocount} Part \textit{ii}.

To bound simultaneous error, we leverage the property that when $c_j(\vec{x})=0$, the counting protocol will report a nonzero value with probability 0. Let $Q = \{j \in [d] \,:\, c_j(\vec{x}) > 0\}$ and let $\alpha$ be the error bound defined in Theorem \ref{thm:zerocount} Part \textit{ii} for the failure probability $\beta/n$.
\begin{align*}
&\Pr\left(\exists j \in [d] \text{ s.t. } |\tc_j - c_j(\vec{x})| > \alpha\right)\\
\le &\Pr\left(\exists j \in Q \text{ s.t. } |\tc_j - c_j(\vec{x})| > \alpha\right) +\Pr\left(\exists j \notin Q \text{ s.t. } |\tc_j - c_j(\vec{x})| > \alpha\right)\\
= &\Pr\left(\exists j \in Q \text{ s.t. } |\tc_j - c_j(\vec{x})| > \alpha\right) \tag{Theorem \ref{thm:zerocount} Part \textit{iii}}\\
\le &\sum_{j \in Q}\Pr\left(|\tc_j - c_j(\vec{x})| > \alpha\right)\\
\le &\sum_{j \in Q} \beta/n \tag{Theorem \ref{thm:zerocount} Part \textit{ii}}\\
\le &\beta \tag{$|Q|\leq n$}
\end{align*}
This concludes the proof.
\end{proof}

\subsection{Applications}
\label{sec:applications}

In this section, we use our histogram protocol to solve two distributional problems; one of these results implies a very strong separation in sample complexity between the non-interactive local model and the shuffled model. Both distributional problems reduce to what we call \emph{support identification}:

\begin{defn}[Support Identification Problem]
The \emph{support identification} problem has positive integer parameters $h \leq d$.
Let $D$ be a set of size $d$ and
let $\bU_H$ be the uniform distribution over any $H\subseteq D$. The set of problem instances is $\{\bU_H \,:\, H \subseteq D\text{ and }|H| = h\}$.
A protocol solves the $(h,d)$-\emph{support identification problem with sample complexity} $n$ if, given $n$ users with data independently sampled from any problem instance $\bU_H$, it identifies $H$ with probability at least 99/100.
\end{defn}

We now show how to solve this problem in the shuffled model.
\begin{clm}
\label{clm:support-id}
Fix any $\eps\in(0,1]$ and $\delta < (1/200h)^{1/25}$. Under $(\eps,\delta)$-differential privacy, the sample complexity of the $(h,d)$-support identification problem is $O(h \log h \cdot (1/\eps^2)\cdot \log (1/\delta))$ in the shuffled model.
\end{clm}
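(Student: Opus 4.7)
The plan is to run $\cP^\hist_{\eps,\delta}$ as a black box and output the set $\hat{H} = \{j \in [d] \,:\, \tilde{c}_j > 0\}$, where $(\tilde{c}_1,\ldots,\tilde{c}_d)$ denotes its output. Privacy is immediate from Theorem \ref{thm:hist} Part \textit{i} (after rescaling $\eps,\delta$ by constant factors), so all the work is in the accuracy analysis, which decomposes into two one-sided containments.

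For $\hat{H} \subseteq H$: if $j \notin H$, then no user holds $x_i = j$, so the indicator bits $b_{i,j}$ are all zero. By the zero-error guarantee for $\cP^\zc_{\eps,\delta}$ (Theorem \ref{thm:zerocount} Part \textit{iii}, invoked inside $\cP^\hist_{\eps,\delta}$), we have $\tilde{c}_j = 0$ with probability $1$. Hence $\hat{H}\cap H^c = \emptyset$ deterministically, and this direction requires nothing of $n$.

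For $H \subseteq \hat{H}$: fix any $j \in H$. Since samples are drawn from $\bU_H$, the empirical frequency $c_j(\vec{x})$ is a normalized binomial with mean $1/h$. A Chernoff bound plus a union bound over the $h$ elements of $H$ gives $c_j(\vec{x}) \ge 1/(2h)$ simultaneously for all $j \in H$ with probability $\ge 1 - h \exp(-\Omega(n/h))$; taking $n = \Omega(h\log h)$ makes this $\ge 199/200$. Now invoke the simultaneous accuracy guarantee (Theorem \ref{thm:hist} Part \textit{iii}) with $\beta = 1/200$: provided $\beta \ge n \delta^{25}$ (which follows from the stated bound $\delta < (1/200h)^{1/25}$ once $n = O(h\log h\cdot \eps^{-2}\log(1/\delta))$ is chosen), we get $|\tilde{c}_j - c_j(\vec{x})| \le \alpha$ for all $j$, with $\alpha = O(\eps^{-2} n^{-1}\log(1/\delta))$. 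To force $\alpha < 1/(2h)$ we need $n = \Omega(h \cdot \eps^{-2}\log(1/\delta))$; together with the sampling bound $n = \Omega(h\log h)$, the product bound $n = O(h\log h\cdot \eps^{-2}\log(1/\delta))$ is certainly sufficient. On the intersection of the two good events (total failure probability $\le 1/100$), every $j \in H$ has $\tilde{c}_j \ge 1/(2h) - \alpha > 0$, so $H \subseteq \hat{H}$.

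I do not expect a genuine obstacle here: the key structural input is the zero-error property of $\cP^\zc_{\eps,\delta}$, which eliminates any dependence on $d$ in the ``false positive'' analysis, leaving only the ``true positive'' side that reduces to a standard Chernoff argument on $n/h$ samples per bin. The one subtlety is bookkeeping the admissibility hypothesis $\beta \ge n\delta^{25}$ of Theorem \ref{thm:hist}: one must verify that the stated constraint $\delta < (1/200h)^{1/25}$ is compatible with the chosen $n$, which it is because $\log(1/\delta) = \Theta(\log h + \log(1/\eps))$ in this regime, so $n = O(h\log h\cdot \eps^{-2}\log(1/\delta))$ still satisfies $n\delta^{25} \le 1/200$.
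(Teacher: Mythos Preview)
Your overall strategy is sound and in fact slightly cleaner than the paper's: the paper thresholds at $(t+1)/n$ and uses the simultaneous-accuracy event to control both false positives and false negatives, whereas you threshold at $0$ and dispatch the false-positive direction deterministically via Theorem~\ref{thm:zerocount} Part~\textit{iii}. Both arguments land on the same sample complexity.

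There is, however, a real gap in your final bookkeeping paragraph. You assert that $\log(1/\delta) = \Theta(\log h + \log(1/\eps))$ ``in this regime,'' but the hypothesis $\delta < (1/200h)^{1/25}$ is only an \emph{upper} bound on $\delta$; nothing prevents $\delta$ from being, say, $\exp(-h)$. In that case $n = \Theta(h^2\log h/\eps^2)$ while $\delta^{25} = \exp(-25h)$, and your inequality $n\delta^{25}\le 1/200$ does hold---but for the wrong reason, and your stated justification is simply false. More to the point, the inequality you actually need, $n\delta^{25}\le \beta$, does \emph{not} follow in general from $\delta^{25} < 1/(200h)$: plugging in gives $n\delta^{25} < n/(200h)$, and $n/h = \Omega(\log h \cdot \eps^{-2}\log(1/\delta))$ is typically much larger than~$1$. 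The correct fix---which the paper notes in a footnote---is that in this reduction the data takes at most $h$ distinct values, so the union bound in the proof of Theorem~\ref{thm:hist} Part~\textit{iii} is over $|Q|\le h$ bins rather than $|Q|\le n$. The operative condition is therefore $\beta \ge h\,\delta^{25}$, and this is exactly what $\delta < (1/200h)^{1/25}$ delivers. Equivalently (and perhaps more natural given your decomposition), you only need accuracy on the $h$ bins $j\in H$, so you can invoke the per-query guarantee of Part~\textit{ii} with failure probability $\beta/h$ and union-bound over~$H$ directly.
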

\begin{proof}
For the purposes of this proof, we assume there is some bijection $f$ between $D$ and $[d]$ so that any reference to $j\in [d]$ corresponds directly to some $f(j)\in D$ and vice versa. Consider the following protocol: execute $\cP^\hist_{\eps,\delta}$ on $n$ samples from $\bU_H$ and then choose the items whose estimated frequencies are at least $(t+1)/n$ (the magnitude of $t$ will be determined later). We will prove that the items returned by the protocol are precisely those of $H$, with probability at least $99/100$.

Let $E_\mathrm{samp}$ be the event that every element in support $H$ has frequency at least $(2t+1)/n$ in the sample. Let $E_\mathrm{priv}$ be the event that the histogram protocol estimates the frequency of every element in $D$ with error at most $t/n$. If both events occur, every element in $H$ has estimated frequency at least $(t+1)/n$ and every element outside $H$ has estimated frequency at most $t/n$. Hence, it suffices to show that $E_\mathrm{samp}$ and $E_\mathrm{priv}$ each occur with probability $\geq 199/200$.

We lower bound the probability of $ E_\mathrm{samp}$ via a coupon collector's argument. That is, if we have $n = O(kh \log h)$ samples from $\bU_H$ then each element of $H$ appears at least $k$ times with probability at least $199/200$. Hence we set $k=(2t+1)$.

To lower bound the probability of $E_\mathrm{priv}$, we simply invoke Theorem \ref{thm:hist}: given that $\eps\in(0,1]$ and $\delta>(1/200h)^{1/25}$, the frequency of every item in $D$ is estimated up to error $t/n$ for some $t=O((1/\eps^2)\cdot \log(1/\delta))$ with probability $\geq 199/200$.\footnote{The bound on $\delta$ in Theorem \ref{thm:hist} is a function of $n$. This is derived from a pessimistic bound on the number of unique values in the input. But in this reduction, we know that data takes one of $h$ values.}
\end{proof}

In the above analysis, if we had used a protocol with simultaneous error that depends on the domain size $d$, then $t$ would in turn depend on $d$.
For example, using the histogram protocol in \cite{CSU+19} would give $t=\Omega((1/\eps) \cdot \sqrt{\log d \cdot \log (1/\delta)})$.
This results in a protocol whose sample complexity grows with $d$ in addition to $h$.

So having shown how to solve the support identification problem with few samples, we now describe two different problems and explain how to reduce these to support identification. This will imply low sample complexity in the shuffled model.

\begin{defn}[Pointer-Chasing Problem \cite{JMR19}]
The \emph{pointer chasing problem} is denoted $\mathrm{PC}(k,\ell)$ where $k,\ell$ are positive integer parameters. A problem instance is $\bU_{\{(1,a),(2,b)\}}$ where $a,b$ are permutations of $[\ell]$. A protocol \emph{solves $\mathrm{PC}(k,\ell)$ with sample complexity} $n$ if, given $n$ independent samples from any $\bU_{\{(1,a),(2,b)\}}$, it outputs the $k$-th integer in the sequence $a_1,b_{a_1},a_{b_{a_1}}\dots$ with probability at least $99/100$. 
\end{defn}

To solve $\mathrm{PC}(k,\ell)$, note that it suffices to identify  $\{(1,a),(2,b)\}$ and directly perform the pointer chasing. Because the support has size $h=2$, $\cP^\hist_{\eps,\delta}$ can be used to solve the problem with just $O((1/\eps^2)\cdot \log(1/\delta))$ samples, independent of $k$ and $\ell$. But in the case where $k=2$, \cite{JMR19} gives a lower bound of $\Omega(\ell / e^\eps)$ for non-interactive local protocols. So there is an arbitrarily large separation between the non-interactive shuffled and non-interactive local models (Theorem \ref{thm:sep-non-int}).

\begin{defn}[Multi-Party Pointer Jumping Problem \cite{JMNR19}]
The \emph{multi-party pointer jumping problem} is denoted $\mathrm{MPJ}(s, h)$ where $s,h$ are positive integer parameters. A problem instance is $\bU_{\{Z_1,\dots, Z_h\}}$ where each $Z_i$ is a labeling of the nodes at level $i$ in a complete $s$-ary tree. Each label $Z_{i,j}$ is an integer in $\{0,\dots,s-1\}$. The labeling implies a root-leaf path: if the $i$-th node in the path has label $Z_{i,j}$, then the $(i+1)$-st node in the path is the $(Z_{i,j})$-th child of the $i$-th node. A protocol \emph{solves $\mathrm{MPJ}(s,h)$ with sample complexity} $n$ if, given $n$ samples from any $\bU_{\{Z_1,\dots, Z_h\}}$, it identifies the root-leaf path with probability at least $99/100$.
\end{defn}

As with pointer-chasing, we can solve $\mathrm{MPJ}(s,h)$ when the support is identified. This takes $O(h\log h \cdot (1/\eps^2)\cdot \log (1/\delta))$ samples in the shuffled model. But \cite{JMNR19} gives a lower bound of $\Omega(h^3 / (\eps^2 \log h))$ in the local model when $s=h^4$, even allowing for sequential interactivity. However, we do not claim a polynomial separation between the shuffled model and sequentially interactive local model. This would require a proof that every sequentially interactive local protocol has a counterpart in the shuffled model.

\medskip


Note that the reductions we employ can also be applied in the central model.
That is, instead of executing $\cP^\hist_{\eps,\delta}$ in the reduction (Claim \ref{clm:support-id}), execute the central model algorithm, from \cite{BNS16}, with simultaneous error $O((1/(\eps n))\cdot \log (1/\delta))$.
This improves the bounds by $1/\eps$.

\begin{table}[H]
\caption{The sample complexity of private pointer-chasing (PC) and multi-party pointer jumping (MPJ). Shuffled and central results follow from a reduction to histograms.\vspace{1.5ex}}
\label{tab:hist-applications}
\centering
\renewcommand{\arraystretch}{1}
    \begin{tabular}{ccc}
     Model & $\mathrm{PC}(k, \ell)$ & $\mathrm{MPJ}(s, h)$ \\ \Xhline{2.5\arrayrulewidth}
     \rule{0ex}{3ex}\multirow{2}{*}{Local} & $\Omega(\ell / e^\eps)$ \cite{JMR19} & $\Omega(h^3 / (\eps^2 \log h))$ \cite{JMNR19}\\
     \vspace{.5ex}& {\footnotesize for $k=2$ } &{\footnotesize for $s=h^4$, seq. interactive }\\ \hline
     \rule{0ex}{3.5ex}\vspace{1ex}\multirow{2}{*}{Shuffled} & \multirow{2}{*}{$O\!\left(\frac{1}{\eps^2} \log \frac{1}{\delta}\right)$ } & $ O\!\left(h \log h \cdot \frac{1}{\eps^2} \log \frac{1}{\delta}\right)$  \\
     \vspace{.5ex}& &{\footnotesize for $\delta < (1/200h)^{1/25}$}\\ \hline
     \rule{0ex}{3.5ex}\vspace{1ex}Central & $O\!\left(\frac{1}{\eps}\log \frac{1}{\delta}\right)$  & $O\!\left(h\log h \cdot \frac{1}{\eps} \log \frac{1}{\delta}\right)$  \\ \Xhline{2.5\arrayrulewidth}
     \end{tabular}
\end{table}

\section{Pure Differential Privacy in the Shuffled Model}
\label{sec:pure}

In this section, we prove that any single-message shuffled protocol that satisfies $\eps$-differential privacy can be simulated by a local protocol under the same privacy constraint.

\begin{thm}[Formalization of Thm. \ref{thm:need-delta}]
\label{thm:need-delta-formal}
For any single-message shuffled protocol $\cP=(\cR,\cA)$ that satisfies $\eps$-differential privacy, there exists a local protocol $\cP_L=(\cR_L,\cA_L)$ that satisfies $\eps$-differential privacy and $\cP_L(\vec{x})$ is identically distributed to $\cP(\vec{x})$ for every input $\vec{x}\in \cX^n$.
\end{thm}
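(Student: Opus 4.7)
The approach is to let the local analyzer do the shuffling itself. Specifically, set $\cR_L := \cR$ and $\cA_L := \cA \circ \cS$, so that
\[
\cP_L(\vec{x}) \;=\; \cA(\cS(\cR(x_1), \ldots, \cR(x_n))) \;=\; \cP(\vec{x})
\]
as distributions on $\cZ$. This immediately handles the distributional requirement. The only nontrivial obligation is to show that $\cR$, viewed as an algorithm on datasets of size one, is itself $\eps$-differentially private.

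For this, I would fix arbitrary $x, x' \in \cX$ and a measurable set $B \subseteq \cY$, and consider the two neighboring $n$-element datasets $\vec{x} = (x, x, \ldots, x)$ and $\vec{x}\,' = (x', x, \ldots, x)$. Writing $q(u) := \Pr[\cR(u) \in B]$, the event $B^n$ (every message lands in $B$) is permutation-invariant, so by independence of the $n$ calls to $\cR$,
\begin{align*}
\Pr[(\cS \circ \cR)(\vec{x}) \in B^n] &= q(x)^n,\\
\Pr[(\cS \circ \cR)(\vec{x}\,') \in B^n] &= q(x')\cdot q(x)^{n-1}.
\end{align*}
The hypothesis that $\cS \circ \cR$ is $\eps$-differentially private (with $\delta = 0$) then gives $q(x)^n \leq e^\eps \cdot q(x') \cdot q(x)^{n-1}$. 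When $q(x) > 0$, dividing by $q(x)^{n-1}$ yields $q(x) \leq e^\eps \cdot q(x')$; when $q(x) = 0$ the inequality is vacuous. Since $B$ was arbitrary, $\cR$ is $\eps$-differentially private as a single-user mechanism, which is exactly the local privacy requirement for $\cP_L$.

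The main conceptual step---and the only delicate point---is the cancellation of $q(x)^{n-1}$: independence across users lets the privacy ratio exponentiate over all $n$ coordinates, and the absence of any additive slack ($\delta = 0$) is what allows those $n-1$ duplicated factors to be cleanly divided out. This cancellation breaks as soon as $\delta > 0$, since one would instead get $q(x)^n \leq e^\eps q(x') q(x)^{n-1} + \delta$ and no per-message bound on $q(x)/q(x')$ would follow. This is precisely why the collapse to the local model is specific to pure differential privacy, and dovetails with the known fact that single-message shuffling can improve upon the local model only in the $\delta > 0$ regime.
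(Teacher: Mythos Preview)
Your proof is correct and follows essentially the same approach as the paper: define $\cA_L = \cA \circ \cS$ and $\cR_L = \cR$, then prove that $\cR$ itself is $\eps$-differentially private by comparing the datasets $(x,\dots,x)$ and $(x',x,\dots,x)$ on the permutation-invariant event $B^n$. The paper phrases the latter argument as a proof by contradiction rather than a direct division by $q(x)^{n-1}$, but the content is identical.
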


We start with the following claim, which strengthens a theorem in \cite{CSU+19} for the special case of pure differential privacy in the shuffled model:

\begin{clm}
\label{clm:pure-removal-single}
Let $\cP=(\cR,\cA)$ be any single-message shuffled protocol that satisfies $\eps$-differential privacy. Then $\cR$ is an $\eps$-differentially private algorithm.
\end{clm}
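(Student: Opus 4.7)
The plan is to verify, for every pair $x, x' \in \cX$ and every measurable $E \subseteq \cY$, the inequality $\Pr[\cR(x) \in E] \le e^\eps \Pr[\cR(x') \in E]$. Writing $p = \Pr[\cR(x) \in E]$ and $q = \Pr[\cR(x') \in E]$, the goal reduces to $p \le e^\eps q$. The central construction is a pair of neighboring datasets in which every background user looks identical to the ``alternative'' data point: take $\vec{x} = (x, x', \dots, x')$ and $\vec{x}\,' = (x', x', \dots, x')$, which differ only in user $1$. Any event depending solely on the multiset of shuffled messages is permutation invariant and is therefore a legitimate test event for the $\eps$-differential privacy guarantee of $\cS\circ\cR$.

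In the main case $q > 0$, I would apply shuffled $\eps$-DP to the invariant event $T_{\mathrm{all}} = \{\vec{y} \in \cY^n : y_i \in E \text{ for all } i\}$. Because the $n-1$ background users contribute independent draws from $\cR(x')$, the probabilities factor: $\Pr[T_{\mathrm{all}} \mid \vec{x}] = p\,q^{n-1}$ and $\Pr[T_{\mathrm{all}} \mid \vec{x}\,'] = q^n$. The privacy inequality becomes $p\,q^{n-1} \le e^\eps q^n$, and dividing both sides by $q^{n-1} > 0$ yields $p \le e^\eps q$.

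For the edge case $q = 0$, the previous event gives the trivial $0 \le 0$, so I would switch to $T_{\mathrm{some}} = \{\vec{y} : y_i \in E \text{ for some } i\}$. Under $\vec{x}\,'$ this event has probability $1 - (1-q)^n = 0$, while under $\vec{x}$ it equals $1 - (1-p)(1-q)^{n-1} = p$. Pure DP then forces $p \le e^\eps \cdot 0 = 0$, so $p = 0$ and the target inequality holds trivially. Combining the two cases shows $\cR$ is $\eps$-DP.

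The main step is the choice of the asymmetric neighboring datasets together with the two permutation-invariant test events; once the probabilities factor in this clean way, the cancellation is automatic and I do not expect any calculational obstacle. The reason the argument yields the full $\eps$ bound on $\cR$, rather than the $\eps + \ln n$ loss from \cite{CSU+19} in the $(\eps,\delta)$ setting, is precisely that the $q^{n-1}$ factor contributed by the background users cancels exactly, with no additive $\delta$ slack to fuss over.
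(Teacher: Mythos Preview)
Your proof is correct and follows essentially the same approach as the paper: both use the permutation-invariant product event $E^n$ (your $T_{\mathrm{all}}$) and factor the resulting probabilities over the independent randomizer calls. The only difference is that the paper argues by contradiction and places the $n-1$ background users at $x$ rather than $x'$; since the contradiction hypothesis $p > e^\eps q$ forces $p > 0$, dividing by $p^{n-1}$ is automatically safe and no separate $q=0$ case is needed.
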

\begin{proof}
Assume for contradiction that $\cR$ is not $\eps$-differentially private.
So there are values $x,x'\in\cX$ and a set $Y\subseteq \cY$ such that 
\begin{align*}
\Pr[\cR(x)\in Y] > e^\eps\cdot \Pr[\cR(x')\in Y].
\end{align*}
Let $\vec{x} = (\underbrace{x,\dots,x}_{n~\textrm{copies}})$ and $\vec{x}\,' = (x', \underbrace{x,\dots,x}_{n-1~\textrm{copies}})$.
Now consider $Y^n$, the set of message vectors where each message belongs to $Y$.
\begin{align*}
\Pr[(\cS \circ \cR)(\vec{x}) \in Y^n]
&= \Pr[\cR(\vec{x}) \in Y^n]\\
&= \Pr[\cR(x) \in Y]^n\\
&> e^\eps\cdot\Pr[\cR(x') \in Y]\cdot\Pr[\cR(x) \in Y]^{n-1}\\
&= e^\eps\cdot\Pr[(\cS \circ \cR)(\vec{x}\,') \in Y^n]
\end{align*}
which contradicts the fact that $\cS \circ \cR$ is $\eps$-differentially private.
\end{proof}

Now we are ready to prove Theorem \ref{thm:need-delta-formal}.
\begin{proof}[Proof of Theorem \ref{thm:need-delta-formal}]
Consider the aggregator $\cA_L$ that applies a uniformly random permutation to its input and then executes $\cA$. Then $\cP_L = (\cR, \cA_L)$ is a local protocol that simulates $\cP$, in the sense that $\cP_L(\vec{x})$ is identically distributed to $\cP(\vec{x})$ for every $\vec{x}\in \cX^n$. And by Claim \ref{clm:pure-removal-single}, the randomizer is $\eps$-differentially private.
\end{proof}

\subsection{Roadblocks to Generalizing Theorem \ref{thm:need-delta-formal}}
One might conjecture Claim \ref{clm:pure-removal-single} also holds for multi-message protocols and thus immediately generalize Theorem \ref{thm:need-delta-formal}. However, this is not the case:
\begin{clm}
\label{clm:pure-removal-multi}
There exists a multi-message shuffled protocol that is $\eps$-differentially private for all $\eps \geq 0$ but its randomizer is not $\eps$-differentially private for \emph{any} finite $\eps$.
\end{clm}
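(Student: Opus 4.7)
The plan is to exhibit a deterministic randomizer whose per-user output encodes the user's input in the \emph{order} of its messages, so that two different inputs produce outputs with disjoint support (hence $\cR$ cannot be $\eps$-DP for any finite $\eps$), while the shuffler destroys that order and leaves an input-independent multiset (hence $\cS\circ\cR$ is $0$-DP). The key observation driving the construction is that the shuffler acts on the \emph{concatenated} message vector, discarding all information about which message came from which user and in what position. This means we can hide the input entirely inside a feature that the shuffler erases.

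Concretely, I would take $\cX=\{0,1\}$, $\cY=\{0,1\}$, and define
\[
\cR(0)=(0,1),\qquad \cR(1)=(1,0),
\]
with the analyzer $\cA$ set to be any convenient function (say the identity), since the privacy of a shuffled protocol depends only on $\cS\circ\cR$. For any input $\vec{x}\in\{0,1\}^n$, the concatenated vector $(\cR(x_1),\ldots,\cR(x_n))$ always contains exactly $n$ zeros and $n$ ones, so after the uniformly random permutation, $(\cS\circ\cR)(\vec{x})$ is distributed uniformly over binary strings of length $2n$ and Hamming weight $n$, regardless of $\vec{x}$. Thus $\cS\circ\cR$ is $0$-DP, and so the protocol satisfies $\eps$-DP for every $\eps\ge 0$.

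For the second half, $\cR$ fails to be $\eps$-DP for any finite $\eps$ because $\cR(0)$ and $\cR(1)$ are point masses on $(0,1)$ and $(1,0)$ respectively. Taking $T=\{(0,1)\}$, we have $\Pr[\cR(0)\in T]=1$ and $\Pr[\cR(1)\in T]=0$, which violates $\eps$-DP at every finite $\eps$.

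The only thing that could qualify as an obstacle is verifying that this construction is really allowed by the multi-message shuffled model, i.e., that the shuffler indeed destroys intra-user order. This is immediate from the definition in the preliminaries, which specifies that $\cS$ concatenates all message vectors and applies a \emph{single} uniformly random permutation to the entire concatenation. In fact the same construction works even with $n=1$, which highlights that the gap between Claim \ref{clm:pure-removal-single} and Claim \ref{clm:pure-removal-multi} is entirely about whether users can exploit ordering within their own vector of messages.
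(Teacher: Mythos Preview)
Your construction is correct and is exactly the same as the paper's: the paper defines $\cR^\infty(x)=(x,1-x)$, which is precisely your $\cR$, and argues, just as you do, that the shuffled output is always a uniformly random arrangement of $n$ zeros and $n$ ones (hence $0$-DP) while the randomizer's first message reveals $x$ (hence not $\eps$-DP for any finite $\eps$). There is nothing to add.
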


\begin{proof}
Consider the randomizer $\cR^\infty$ that on input $x \in \zo$ outputs two messages $x$ and $1-x$.
The output of the shuffler $S\circ\cR^\infty$ is 0-differentially private since for all inputs the output is a random permutation of exactly $n$ 0s and $n$ 1s.
However, $\cR^\infty$ is not $\eps$-differentially private for any finite $\eps$ as the first message of $\cR^\infty(x)$ is that user's bit $x$.
\end{proof}

We note that it is without loss of accuracy or privacy to suppose that a randomizer shuffles its messages prior to sending them to the shuffler. We call these \emph{pre-shuffle} randomizers. Observe that the pre-shuffle version of $\cR^\infty$ (i.e. $\cS \circ \cR^\infty$ for 1 user) satisfies 0-differential privacy. So one might conjecture Claim \ref{clm:pure-removal-single} holds for pre-shuffle randomizers and thus generalize Theorem \ref{thm:need-delta-formal}. But this too is not the case:

\begin{clm}
There exists a multi-message shuffled protocol that is $\eps$-differentially private for some finite $\eps$ but its pre-shuffle randomizer is not $\eps$-differentially private for \emph{any} finite $\eps$.
\end{clm}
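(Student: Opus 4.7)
The plan is to exhibit a multi-message shuffled protocol $\cP = (\cR, \cA)$ that is $\eps$-differentially private for some finite $\eps$, yet whose pre-shuffle randomizer $\cR_{ps}$ is not $\eps'$-differentially private for any finite $\eps'$. For the latter, it suffices that the multiset supports of $\cR(x)$ differ across some neighboring inputs; for the former, the aggregate distribution over $n$ users must have matching support on neighbors together with a bounded density ratio.

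I would first write down a candidate randomizer. A natural att
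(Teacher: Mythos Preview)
Your proposal is cut off before the crucial step: you outline the correct strategy---find a randomizer whose single-user multiset support differs between two inputs (so the pre-shuffle randomizer fails pure DP for every finite $\eps$), yet whose $n$-user shuffled output has identical, finite support on neighboring datasets (so the full protocol is $\eps$-DP for some finite $\eps$)---but you never exhibit such a randomizer. The construction is the entire content of the claim; without it there is nothing to verify.

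For comparison, the paper's construction is as follows. The randomizer $\cR^{\mathrm{gap}}$ takes $x\in\{0,1\}$ and outputs four bits: on input $0$ every vector in $\{0,1\}^4$ is possible, while on input $1$ the output never has exactly two $1$s (so the number of $1$s lies in $\{0,1,3,4\}$). A single user's multiset of messages therefore distinguishes $0$ from $1$: the multiset $\{0,0,1,1\}$ is in the support for input $0$ but not for input $1$, so the pre-shuffle randomizer is not $\eps$-DP for any finite $\eps$. On the other hand, for $n\ge 2$ the shuffled output is determined by the total number of $1$s among the $4n$ messages, and every integer in $\{0,\dots,4n\}$ can be written as a sum of $n$ terms from $\{0,1,3,4\}$. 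Hence the support of $(\cS\circ\cR^{\mathrm{gap}})(\vec{x})$ is all of $\{0,1\}^{4n}$ regardless of $\vec{x}$, and since this is a finite set with all probabilities positive, the maximum likelihood ratio over neighboring datasets is finite. Your plan is on the right track, but you would still need to supply a construction with this ``gap at one user, no gap at $n\ge 2$'' property and verify both halves.
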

\begin{proof}

Consider any randomizer $\cR^\textrm{gap}$ that takes binary input and outputs four binary messages with the following constraint: the messages can take any value when the input is 0 but on input 1, there cannot be exactly two 1s. Formally, the supports are $\mathrm{supp}(\cR^\mathrm{gap}(0)) = \zo^4$ and $\mathrm{supp}(\cR^\mathrm{gap}(1)) = \{\vec{y} \in \zo^4 \,:\, \sum_{i}y_i \neq 2\}$.

The pre-shuffle randomizer $\cS \circ \cR^\textrm{gap}$ cannot satisfy pure differential privacy because
$(0,0,1,1) \in \mathrm{supp}(\cR^\mathrm{gap}(0))$ but $(0,0,1,1) \notin \mathrm{supp}(\cR^\mathrm{gap}(1))$.
On the other hand, for all $n\ge 2$ and $\vec{x} \in \zo^n$,
\begin{align*}
\mathrm{supp}(\cS\circ\cR^\mathrm{gap}(\vec{x})) = \zo^{4n}.
\end{align*}
This follows from the fact that every number in $\{0, \ldots, 4n\}$---the number of 1s sent to the shuffler---can be expressed as the sum of $n$ numbers from $\{0, 1, 3, 4\}$.
Thus, there is some finite $\eps$ for which the protocol with randomizer $\cR^\mathrm{gap}$ is $\eps$-differentially private.
\end{proof}

\ifnum\arxiv=1
    \section*{Acknowledgments}
    We are grateful to Daniel Alabi and Maxim Zhilyaev for discussions that shaped the early stages of this work.
    We are also indebted to Matthew Joseph and Jieming Mao for directing us to the pointer-chasing and multi-party pointer-jumping problems. Finally, we thank Salil Vadhan for editorial comments and providing a simpler construction for Claim \ref{clm:pure-removal-multi}.
\fi
\bibliography{shuffle}

\appendix

\section{Privacy via Smooth Distributions}
\label{apdx:privacy}
Ghazi, Golowich, Kumar, Pagh and Velingker \cite{GGK+19} identify a class of distributions and argue that, if $\eta$ is sampled from such a distribution, adding $\eta$ to a 1-sensitive sum ensures differential privacy of that sum.

\begin{defn}[Smooth Distributions, \cite{GGK+19}]
A distribution $\bD$ over $\Z$ is $(\eps,\delta,k)$-smooth if for all $k' \in [-k, k]$,
\[
\Pr_{Y\sim \bD}\left[\frac{\Pr_{Y'\sim \bD}[Y'=Y]}{\Pr_{Y'\sim \bD}[Y'=Y+k']} \geq e^{|k'|\eps}\right]\leq \delta.
\]
\end{defn}

\begin{lem}[Smoothness for Privacy, \cite{GGK+19}]
\label{lem:smooth-for-privacy}
Let $f:\Z^n \rightarrow \Z$ be a function such that $|f(\vec{x}) - f(\vec{x}\,')| \le 1$ for all $\vec{x}\sim\vec{x}\,'$.
Let $\bD$ be an $(\eps,\delta,1)$-smooth distribution. The algorithm that takes as input $\vec{x} \in \Z^n$, then samples $\eta \sim \bD$ and reports $f(\vec{x}) + \eta$ satisfies $(\eps,\delta)$-differential privacy.
\end{lem}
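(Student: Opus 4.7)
The plan is to fix an arbitrary ordered pair of neighbors $\vec{x} \sim \vec{x}\,'$ and an arbitrary event $T \subseteq \Z$, and verify the $(\eps,\delta)$-differential privacy inequality $\Pr[M(\vec{x}) \in T] \le e^\eps \Pr[M(\vec{x}\,') \in T] + \delta$ for the algorithm $M(\vec{x}) = f(\vec{x}) + \eta$ with $\eta \sim \bD$. By 1-sensitivity of $f$, the quantity $k' = f(\vec{x}) - f(\vec{x}\,')$ lies in $\{-1,0,1\}$. The case $k'=0$ is immediate because $M(\vec{x})$ and $M(\vec{x}\,')$ are then identically distributed, so it suffices to handle $|k'|=1$. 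Letting $S = T - f(\vec{x})$, I would rewrite $\Pr[M(\vec{x}) \in T] = \Pr_{Y \sim \bD}[Y \in S]$ and $\Pr[M(\vec{x}\,') \in T] = \Pr_{Y \sim \bD}[Y \in S + k']$, which reduces the claim to a statement purely about $\bD$.

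The heart of the argument is to split $S$ using the ``bad'' set furnished by smoothness, namely $B = \{y \in \Z : \Pr_{Y' \sim \bD}[Y'=y] \geq e^{|k'|\eps} \cdot \Pr_{Y' \sim \bD}[Y'=y+k']\}$. The $(\eps,\delta,1)$-smoothness of $\bD$ (applied with the particular $k' \in [-1,1]$) directly gives $\Pr_{Y \sim \bD}[Y \in B] \leq \delta$. On $S \setminus B$, every $y$ satisfies the pointwise likelihood bound $\Pr[Y=y] < e^{|k'|\eps}\Pr[Y=y+k'] \leq e^\eps \Pr[Y=y+k']$; summing this over $y \in S \setminus B$ and extending the sum back to all of $S$ (using non-negativity) yields $\sum_{y \in S \setminus B} \Pr[Y=y] \leq e^\eps \sum_{y \in S}\Pr[Y = y+k'] = e^\eps \Pr[M(\vec{x}\,') \in T]$. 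On $S \cap B$, I use the mass bound $\sum_{y \in S \cap B}\Pr[Y=y] \leq \Pr_Y[Y \in B] \leq \delta$. Adding these two estimates gives exactly the desired inequality.

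The only subtle point is making sure the argument covers both orderings of the neighbor pair without duplicate work. Swapping $\vec{x}$ and $\vec{x}\,'$ simply replaces $k'$ by $-k'$, and the smoothness definition is already stated for all $k' \in [-k,k]$ (with $|k'|\eps$ depending only on $|k'|$), so the same decomposition immediately gives the reverse inequality. I do not expect a genuine obstacle in this proof: it is essentially bookkeeping, and the only computational step that warrants care is the re-indexing $\sum_{y \in S}\Pr[Y = y+k'] = \Pr_{Y}[Y \in S+k']$, which must be correctly aligned with the shift that turned $T$ into $S$.
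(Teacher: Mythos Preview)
The paper does not actually prove this lemma; it merely restates it with attribution to \cite{GGK+19} and then applies it in Corollary~\ref{coro:mneg-private}. Your argument is correct and is precisely the standard proof: reduce to a shift of the noise distribution via $k'=f(\vec{x})-f(\vec{x}\,')\in\{-1,0,1\}$, isolate the ``bad'' set $B$ where the pointwise likelihood ratio exceeds $e^{|k'|\eps}$, bound its mass by $\delta$ using smoothness, and bound the remainder pointwise. The re-indexing $T - f(\vec{x}\,') = S + k'$ is handled correctly, and your remark that the definition already quantifies over $k'\in[-k,k]$ (so both signs are covered) is exactly the right observation.
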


\begin{lem}[Binomial Distribution is Smooth, \cite{GGK+19}]
\label{lem:binomial-is-smooth}
For any positive integer $n$, $\gamma \in [0,1/2]$, $\alpha \in [0,1]$, and any $k\leq \alpha\gamma n/ 2$, the distribution $\Bin(n,\gamma)$ is $(\eps, \delta, k)$-smooth with
\begin{align*}
\eps
= \ln \frac{1+\alpha}{1-\alpha}
~~~~~\text{and}~~~~~
\delta
= \exp\left(-\frac{\alpha^2\gamma n}{8}\right) + \exp\left(-\frac{\alpha^2\gamma n}{8 + 2\alpha}\right).
\end{align*}
\end{lem}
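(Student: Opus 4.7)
The plan is to prove the smoothness condition directly from the binomial PMF. Fix $k' \in [-k, k]$; assume $k' > 0$ (the case $k' < 0$ will be handled symmetrically, and $k' = 0$ is trivial). For $Y \sim \Bin(n, \gamma)$ and an integer $Y$ in the support, I would first write the likelihood ratio
\begin{align*}
L(Y) \;=\; \frac{\Pr[Y'=Y]}{\Pr[Y'=Y+k']} \;=\; \prod_{i=1}^{k'} \frac{(Y+i)(1-\gamma)}{(n-Y-i+1)\gamma},
\end{align*}
obtained by cancelling factorials and $\gamma$-powers. Each factor equals $1$ exactly when $Y + i = (n+1)\gamma$, and grows linearly in $Y$, so controlling $L(Y)$ reduces to controlling how far $Y$ strays above its mean.

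Next, I would show that on the event $E = \{\,Y \le (1+\alpha/2)\gamma n\,\}$ each factor of the product is at most $(1+\alpha)/(1-\alpha)$, so $L(Y) \le e^{k'\eps}$. Concretely, writing out the inequality $(Y+i)(1-\gamma)(1-\alpha) \le (n - Y - i + 1)\gamma(1+\alpha)$ and using $\gamma \le 1/2$ reduces it to the sufficient condition $Y + i \le (n+1)\gamma(1+\alpha)$. Because $i \le k' \le k \le \alpha\gamma n/2$ and, on $E$, $Y \le (1+\alpha/2)\gamma n$, we indeed have $Y + i \le (1+\alpha)\gamma n \le (n+1)\gamma(1+\alpha)$. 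So $E$ implies $L(Y) \le e^{|k'|\eps}$. For $k' < 0$, an analogous argument on the symmetric event $E' = \{Y \ge (1 - \alpha/2)\gamma n\}$ yields $L(Y) \le e^{|k'|\eps}$ as well.

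Finally, I would bound $\Pr[\neg E] + \Pr[\neg E']$ using multiplicative Chernoff bounds for $\Bin(n,\gamma)$. The standard upper-tail bound gives $\Pr[Y > (1+\alpha/2)\gamma n] \le \exp(-(\alpha/2)^2 \gamma n / (2 + \alpha/2)) \le \exp(-\alpha^2 \gamma n / (8+2\alpha))$, and the lower-tail bound gives $\Pr[Y < (1-\alpha/2)\gamma n] \le \exp(-(\alpha/2)^2 \gamma n / 2) = \exp(-\alpha^2 \gamma n / 8)$. Summing matches the stated $\delta$ exactly.

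The one delicate part is verifying the factor-by-factor bound on $L(Y)$: several inequalities have to line up, and the hypothesis $\gamma \le 1/2$ is used to pass from the exact condition $Y + i \le (n+1)\gamma(1+\alpha)/(1 + \alpha(2\gamma-1))$ to the simpler sufficient condition above. The hypothesis $k \le \alpha\gamma n/2$ provides exactly the slack needed so that the worst index $Y + k'$ still lies within the range where the per-factor bound holds. Once these inequalities are in place, the pieces combine immediately to give the claimed $(\eps, \delta, k)$-smoothness.
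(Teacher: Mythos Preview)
The paper does not prove this lemma; it is quoted verbatim from \cite{GGK+19} and only \emph{applied} (in Corollary~\ref{coro:mneg-private}). So there is no in-paper proof to compare against.

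Your argument is correct and is essentially the standard one. The likelihood-ratio factorisation is right, the reduction via $\gamma\le 1/2$ to the sufficient condition $Y+i\le (n+1)\gamma(1+\alpha)$ is right, and the two Chernoff tails with $t=\alpha/2$ give exactly the two exponential terms in $\delta$. Two minor remarks: (i) the smoothness definition is ``for each fixed $k'$'', so you actually only need $\Pr[\neg E]$ when $k'>0$ and $\Pr[\neg E']$ when $k'<0$; summing both is harmless but unnecessary. (ii) For the $k'<0$ case the ``analogous argument'' does go through, but the verification that $Y-i+1\ge (n+1)\gamma(1-\alpha)$ on $E'$ uses the extra $+1$ (and $\gamma\le 1/2$); it would be worth writing that line out explicitly, as it also covers the edge case $Y+k'\ge 0$ so that the ratio is finite.
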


\begin{coro}
\label{coro:mneg-private}
Fix any $\eps,\delta \in [0,1]$.
Let $n \ge (100/\eps^2)\cdot\ln(2/\delta)$.
The algorithm $\cM_\text{neg}$ that takes as input $\vec{x} \in \{0,-1\}^n$ then samples
\[
\eta \sim \Bin\left(n,\,50\cdot \frac{\ln (2/\delta)}{n\eps^2}\right)
\]
and reports $\eta + \sum x_i$ satisfies $(\eps,\delta)$-differential privacy.
\end{coro}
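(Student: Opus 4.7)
The plan is to invoke Lemma \ref{lem:smooth-for-privacy} applied to the 1-sensitive function $f(\vec{x}) = \sum_i x_i$ on $\{0,-1\}^n$, with noise drawn from $\Bin(n,\gamma)$ for $\gamma := 50\ln(2/\delta)/(n\eps^2)$. The algorithm $\cM_\mathrm{neg}$ is exactly this mechanism, so it suffices to verify that $\Bin(n,\gamma)$ is $(\eps,\delta,1)$-smooth, which will follow from Lemma \ref{lem:binomial-is-smooth} for an appropriate choice of the parameter $\alpha$ in that lemma.

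First I would observe that $f(\vec{x}) = \sum_i x_i$ is indeed 1-sensitive on $\{0,-1\}^n$, since changing one coordinate shifts the sum by $0$ or $\pm 1$. I would also note that the hypothesis $n \geq (100/\eps^2)\ln(2/\delta)$ immediately gives $\gamma \leq 1/2$, so the noise distribution lies in the valid parameter range of Lemma \ref{lem:binomial-is-smooth}.

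Next I would pick $\alpha$ proportional to $\eps$ and small enough that $\ln\frac{1+\alpha}{1-\alpha} \leq \eps$; since the left-hand side equals $0$ at $\alpha = 0$, is continuous and monotonically increasing in $\alpha$, and $\eps \leq 1$, a value such as $\alpha = \tanh(\eps/2)$ works, or any sufficiently small constant multiple of $\eps$. With this choice, and using $\gamma n = 50\ln(2/\delta)/\eps^2$, the condition $k = 1 \leq \alpha\gamma n/2$ becomes a mild lower bound on $\ln(2/\delta)/\eps$ that the hypothesis easily implies. The second requirement, that the smoothness failure probability $\exp(-\alpha^2\gamma n/8) + \exp(-\alpha^2\gamma n/(8+2\alpha))$ be at most $\delta$, reduces (after absorbing the factor of $2$ into the log) to $\alpha^2\gamma n/(8+2\alpha) \geq \ln(2/\delta)$; substituting the formula for $\gamma$ turns this into a purely numerical inequality in $\alpha/\eps$ and $\eps$, and the constant $50$ in the definition of $\gamma$ is calibrated precisely so that this holds for all $\eps \leq 1$.

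Once both smoothness conditions are verified, Lemma \ref{lem:binomial-is-smooth} yields that $\Bin(n,\gamma)$ is $(\eps,\delta,1)$-smooth, and Lemma \ref{lem:smooth-for-privacy} then gives $(\eps,\delta)$-differential privacy of $\cM_\mathrm{neg}$. There is no conceptual obstacle; the only mildly delicate step is simultaneously tuning $\alpha$ and confirming that the tail-bound inequality closes with the constant $50$. This is the most tedious part but is a routine algebraic check rather than a substantive difficulty.
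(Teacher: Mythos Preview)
Your proposal is correct and follows essentially the same route as the paper. In fact, your suggested choice $\alpha = \tanh(\eps/2)$ is precisely the paper's choice $\alpha = (e^\eps-1)/(e^\eps+1)$, which makes $\ln\frac{1+\alpha}{1-\alpha} = \eps$ an equality rather than merely an inequality; the paper then uses the lower bound $\alpha \ge \eps/\sqrt{5}$ together with $8+2\alpha \le 10$ to close the tail-bound computation with the constant $50$. One small caution: your alternative phrasing ``any sufficiently small constant multiple of $\eps$'' is slightly misleading, since $\alpha$ must also be \emph{large} enough (at least about $\eps/\sqrt{5}$) for the exponential tail bound to fall below $\delta$; but your primary choice already handles this.
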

\begin{proof}
When $\alpha = (e^\eps-1)/(e^\eps+1)$ observe that $ \alpha \in [\eps/\sqrt{5}, 1)$ and Lemma \ref{lem:binomial-is-smooth} implies that $\eta$ is sampled from an $(\eps,\delta,1)$-smooth distribution:
\begin{align*}
\ln \frac{1+\alpha}{1-\alpha} &= \ln \frac{(e^\eps+1) + (e^\eps - 1)}{(e^\eps + 1) - (e^\eps - 1)}
= \eps
\end{align*}
and
\begin{align*}
\exp\left(-\frac{\alpha^2\gamma n}{8}\right) + \exp\left(-\frac{\alpha^2\gamma n}{8 + 2\alpha}\right)
&\leq 2\exp\left( -\frac{\alpha^2\gamma n}{10}\right) \tag{$\alpha < 1$}\\
&\le 2\exp\left(-\frac{\gamma \eps^2 n}{50}\right)\\
&= \delta.
\end{align*}
So by Lemma \ref{lem:smooth-for-privacy}, we have $\cM_\text{neg}$ is $(\eps,\delta)$-differentially private.
\end{proof}

\end{document}